\documentclass[a4paper,UKenglish,cleveref,autoref]{lipics-v2021}

\newtheorem*{theorem*}{Theorem}

\title{Decomposing Directed Graphs into Partial Orders is Hard}

\author{Grzegorz Gutowski}
  {Institute of Theoretical Computer Science, Faculty of Mathematics and Computer Science, Jagiellonian University, Krak{\'o}w, Poland \and \url{https://grzegorz.gutowscy.pl}}
  {grzegorz.gutowski@uj.edu.pl}{https://orcid.org/0000-0003-3313-1237}
  {Partially supported by grant no.~2023/49/B/ST6/01738 from National Science Centre, Poland.}
\author{Dominik Jilg}
  {Julius-Maximilians-Universit\"at W\"urzburg, Germany}
  {dominik.jilg@uni-wuerzburg.de}{https://orcid.org/0009-0004-4527-9583}{}
\author{Antonio Lauerbach}
  {Julius-Maximilians-Universit\"at W\"urzburg, Germany}
  {lauerbach@informatik.uni-wuerzburg.de}{https://orcid.org/0009-0007-9093-3443}{}
\author{Miko{\l}aj Rams}
  {Institute of Theoretical Computer Science, Faculty of Mathematics and Computer Science, Jagiellonian University, Krak{\'o}w, Poland}
  {mikolaj.rams@student.uj.edu.pl}{https://orcid.org/0009-0004-9142-9167}{}

\authorrunning{G.~Gutowski, D.~Jilg, A.~Lauerbach, and M.~Rams}

\Copyright{Grzegorz Gutowski, Dominik Jilg, Antonio Lauerbach, and Miko{\l}aj Rams}
\ccsdesc[500]{Theory of computation~Problems, reductions and completeness}
\keywords{Partial Order Decomposition, Directed Graphs, Complexity}

\hideLIPIcs
\nolinenumbers

\usepackage[T1]{fontenc}
\usepackage{booktabs}
\usepackage{mathtools}
\usepackage{array}
\usepackage{siunitx}
\usepackage{xcolor}
\usepackage[size=footnotesize]{todonotes}
\usepackage{xspace}
\usepackage[disableredefinitions]{complexity}
\usepackage{gensymb}
\usepackage{cite}
\usepackage{pgfplots}
\usepackage{tikz}
\usepackage{amsmath}
\usepackage{amssymb}
\usepackage{tcolorbox}
\usepackage{caption}
\usepackage{apptools}
\usepackage{lineno}

\crefname{problem}{Problem}{Problems}
\Crefname{problem}{Problem}{Problems}

\newcommand{\dummy}[1]{}

\newcommand{\rbarc}[1][r]{\ensuremath{#1}{-rainbow arc set}\xspace}
\newcommand{\podcrit}[1][k]{\ensuremath{#1}{-pop-critical}\xspace}

\newcommand{\linedg}[1]{\vec{L}\left(#1 \right)\xspace}
\newcommand{\podn}[1]{\mathrm{pod}\left(#1 \right)\xspace}
\newcommand{\popn}[1]{\mathrm{pop}\left(#1 \right)\xspace}
\DeclareMathOperator{\fin}{in}
\DeclareMathOperator{\fout}{out}

\pgfplotsset{compat=1.18,small,scale only axis}
\graphicspath{{./figures/}}

\definecolor{dark blue}{rgb}{0.121,0.47,0.705}
\let\emph\relax\DeclareTextFontCommand{\emph}{\color{dark blue}\em}

\newtheorem*{observation*}{Observation}

\begin{document}
\maketitle

\begin{abstract}
  Comparability graphs represent partial orders and provide natural building blocks for representing a graph as a union of simpler graphs.
  The analogous task in the directed setting is to represent a digraph as a union of partial orders.
  Aboulker et al.~\cite{aboulker2026} recently introduced the \emph{partial order decomposition number} as a useful structural parameter in their study of tournaments.
  Formally, a partial order decomposition of a digraph~$D$ is a cover of its arc set by transitive acyclic subdigraphs, each representing a strict partial order on the vertex set of~$D$.
  The partial order decomposition number is the minimum number of such subdigraphs needed.
  We also consider the partitioning variant, in which every arc belongs to exactly one of the partial orders in the decomposition.
  We prove that, for every fixed~$k\geq 2$, deciding whether a digraph admits a decomposition into at most~$k$ partial orders is \NP-complete, even for directed acyclic graphs (DAGs).
  For~$k=2$, we reduce from \textsc{NAE-3-Sat}.
  For~$k\geq 3$, we reduce from \textsc{$k$-Coloring}, and the reduction yields DAGs with triangle-free underlying graphs.
  Both reductions establish hardness for both variants.
  We further show that the minimum cover size and the minimum partition size can differ even on DAGs.
\end{abstract}

\section{Introduction}
Representing a graph as a union of simpler graphs is a standard way to measure its structural complexity.
Different choices of which graphs are considered simpler lead to different interesting classifications.
For the purpose of this paper, we say that a graph is simple if it is a comparability graph. 
A \emph{comparability graph} has the elements of a partially ordered set as its vertices, with two distinct vertices adjacent if and only if they are comparable.
Equivalently, it is a graph that admits a transitive acyclic orientation.
Comparability graphs are perfect: in every induced subgraph, the chromatic number equals the clique number.
In particular, the chromatic number of a comparability graph equals the clique number.
Consequently, if the edge set of a graph~$G$ is covered by $r$ comparability subgraphs, then we get $\chi(G)\leq\omega(G)^r$ from the product coloring obtained from proper colorings of the covering graphs.
This principle has geometric applications.
For example, Larman, Matou\v{s}ek, Pach, and T\"{o}r\H{o}csik~\cite{larman1994ramsey} cover the disjointness graph of a family of planar convex sets by four comparability graphs.
More recently, Gy\'arf\'as, Marits, and T\'oth~\cite{gyarfas2024partitioning} studied covers and edge partitions into comparability graphs systematically.
%They showed that two parts suffice for almost all perfect graphs, whereas even interval graphs can require arbitrarily many parts.

In the directed setting, the orientations of the edges are prescribed, and we ask for a decomposition into transitive acyclic digraphs.
We identify such a digraph with a strict partial order on its vertex set.
Following Aboulker, Crew, Duron, Fan, Jacob, Kimbrough, Koerts, Moore, Spirkl, and Thomassé~\cite{aboulker2026}, a \emph{partial order decomposition} of a digraph~$D$ is a family of partial orders $Q_1,\ldots,Q_k$ on $V(D)$ such that
\[
    E(D)=E(Q_1)\cup\cdots\cup E(Q_k).
\]
The minimum possible number of parts is the \emph{partial order decomposition number}, denoted by~$\podn{D}$.
We also consider the partitioning variant, in which the arc sets of the partial orders must be pairwise disjoint, and define the corresponding \emph{partial order partition number}, denoted by~$\popn{D}$.
%Equivalently, this variant asks for an arc coloring in which each color class is a partial order.
For a fixed integer~$k$, the associated decision problems ask whether $\podn{D}\leq k$ or $\popn{D}\leq k$, respectively.

The structural motivation for this parameter comes from directed coloring.
Aboulker et al.~\cite{aboulker2026} prove that
\[
    \vec{\chi}(D)\leq\vec{\omega}(D)^{\podn{D}},
\]
where $\vec{\chi}(D)$ is the minimum number of acyclic induced subdigraphs partitioning $V(D)$, and $\vec{\omega}(D)$ is the minimum clique number of a backedge graph over all vertex orderings of~$D$.
Here, a backedge graph has an undirected edge for each arc directed backwards in the chosen ordering.
Thus, bounding the number of partial orders gives a polynomial bound on the dichromatic number in terms of the directed clique number.
The same work asks about the complexity of computing or approximating $\podn{T}$ for a tournament~$T$.
These results motivate the algorithmic study of partial order decompositions on general digraphs and on restricted classes.

Directed acyclic graphs (DAGs) are a particularly natural class to consider.
Every subdigraph of a DAG is already acyclic, so the decomposition problem concerns only transitivity.
Nevertheless, transitivity imposes constraints between different arcs: if a part contains $u\to v$ and $v\to w$, it must also contain $u\to w$.
In particular, if the latter arc is absent in the input graph, the two former arcs cannot belong to the same part.
Moreover, deleting an arc from a partial order can destroy transitivity, so assigning each covered arc to just one of its covering parts need not produce a valid partition.
The covering and partitioning variants therefore require separate consideration.
In \cref{fig:different}, we present a simple example of a digraph~$D$ that has a unique covering with two partial orders.
To see that the covering is unique, it is enough to observe that any two arcs that meet head-to-tail, $u \rightarrow v \rightarrow w$, with the arc $u \rightarrow w$ not present in $D$, cannot be in the same partial order.
In the unique covering, the arc in the center of the figure is contained in both partial orders, so the digraph cannot be partitioned into two partial orders, and we have $\podn{D}=2$ and $\popn{D}=3$.

\begin{figure}[tbp]
   \centering
   \includegraphics[page=3]{figures/gadgetes.pdf}
   \caption{A DAG~$D$ with $\podn{D}=2$ and $\popn{D}=3$. The graph has a unique covering with two partial orders (depicted blue and red). Since the central arc must be contained in both partial orders, the graph cannot be partitioned into two partial orders.}
   \label{fig:different}
\end{figure}

\paragraph*{Our results.}
We establish the following complexity classification for both variants.
\begin{theorem*}
    For every fixed integer $k\geq2$, deciding whether the arcs of a DAG can be covered by at most~$k$ partial orders is \NP-complete, as is deciding whether they can be partitioned into at most~$k$ partial orders.
    For $k\geq3$, both statements hold even when the underlying graph is triangle-free.
\end{theorem*}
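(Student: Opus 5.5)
Membership in \NP is immediate for both variants: a certificate assigns to each arc a nonempty set of labels from $\{1,\dots,k\}$ (a single label in the partition variant), and we check in polynomial time that every label class is transitive; acyclicity is automatic since $D$ is a DAG. For hardness, the key tool is the observation already used for \cref{fig:different}. Call two arcs $u\to v$ and $v\to w$ \emph{conflicting} if $u\to w\notin E(D)$. Conflicting arcs can never share a part. If the underlying graph is triangle-free, then no arc $u\to w$ closes a directed path $u\to v\to w$, so every such pair conflicts. A set of arcs is then transitive exactly when it contains no two arcs meeting head-to-tail. In a triangle-free DAG, covering and partitioning coincide, because any subset of a valid part is again a valid part, and $\popn{D}$ is the chromatic number of the \emph{conflict graph} on $E(D)$. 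This conflict graph is the underlying graph of the line digraph $\linedg{D}$.

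For $k\ge3$ I would reduce from \textsc{$k$-Coloring}. Given a graph $G$, the goal is a triangle-free DAG $D$ whose conflict graph contains a copy of $G$ and whose extra structure forces nothing beyond proper colouring. A natural construction is as follows. For each vertex $x$ of $G$ introduce an arc $a_x=s_x\to t_x$. For each edge $xy$, with $x<y$ in a fixed ordering, join $a_x$ and $a_y$ by a chain of conflicting arcs that transmits ``different colour'' with $k$ colours available. The obstacle is that a single path $t_x\to m\to s_y$ forces only that the middle arc differs from both ends; it does not force $a_x$ and $a_y$ to differ. I would therefore build an edge gadget: a small triangle-free DAG whose conflict graph is a $k$-colourable graph in which two designated arcs must receive different colours in every $k$-colouring. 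One option is to take $k-1$ parallel middle arcs forming a clique in the conflict graph together with the arc of one endpoint. To obtain cliques of conflicting arcs we need stars of head-to-tail pairs. Every arc entering $v$ conflicts with every arc leaving $v$, which gives complete bipartite conflict structure, not cliques. A clique of size $k$ therefore needs a directed path $P_{k+1}$, and even then only consecutive arcs conflict. This is precisely the main obstacle: line graphs of DAGs are restricted. I would design the gadget with several paths so that the derived conflict graph is a $k$-chromatic ``uniquely coloured'' structure. For example, use paths through a common vertex $v$ with in-arcs $I$ and out-arcs $O$, forming $K_{|I|,|O|}$, and chain such bipartite blocks to force colour classes. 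If direct gadget design proves intractable, the fallback is to reduce from a hard colouring problem on line graphs of DAGs, namely arc-colouring of DAGs with no monochromatic directed $P_3$, which is known to be hard.

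For $k=2$ I would reduce from \textsc{NAE-3-Sat}, this time exploiting non-triangle-free gadgets. For each variable, use a long alternating path of conflicting arcs; with two colours this has two colourings, which encode the truth value. Each clause needs three literal arcs that are not all equal. Transitive triangles $u\to v\to w$, $u\to w$ permit the two short arcs to share a colour only if the long arc also has that colour. A gadget like that of \cref{fig:different}, in which an arc is forced into both parts, can force a clause's literal arcs to be not all equal. Finally, the partition variant must be checked separately. I would verify that every satisfying assignment yields a partition, with no arc needed in two parts, and that any cover yields a satisfying assignment. For $k=2$ the main effort lies in designing this clause gadget and checking both directions.
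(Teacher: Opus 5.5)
Your preliminaries are correct and match the paper: conflicting pairs, covers coinciding with partitions on triangle-free DAGs, and $\popn{D}=\chi(UG(\linedg{D}))$. \textsc{NAE-3-Sat} and $k$-\textsc{Coloring} are the right source problems. But neither reduction is supplied, and the mechanisms you sketch do not work.

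\textbf{The case $k\ge3$.} The clique-based edge gadget is impossible. For every DAG $D$, $UG(\linedg{D})$ is triangle-free, because three pairwise head-to-tail arcs would form a directed cycle or require a loop. ``Chaining blocks $\fin(v)$--$\fout(v)$'' only describes an arbitrary $UG(\linedg{D})$; it is not a construction. A colour-forcing gadget therefore needs a triangle-free conflict graph of chromatic number at least $k$, and building one is exactly the nontrivial step you leave open. Your fallback is circular: on triangle-free DAGs, arc-colouring with no monochromatic directed $P_3$ \emph{is} the question $\popn{D}\le k$, so calling it ``known to be hard'' assumes the theorem.

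The missing idea is the paper's rainbow-set construction:
\begin{enumerate}
\item Acyclically orient a triangle-free graph with $\chi\ge 2^{k+1}$ (Mycielski). By \cref{thm:harner}, $\popn{D}\ge k+1$.
\item Delete arcs down to a $(k+1)$-pop-critical DAG; each deletion lowers $\popn{D}$ by at most one.
\item Remove an arc $u\to s$ into a sink. Now every $k$-colouring uses all $k$ colours on $\fin(u)$.
\item A minimal such subset of $\fin(u)$, minus one arc, is a $(k-1)$-rainbow arc set $S$.
\item Copying $S$ onto a fresh vertex $v'$ forces at least $k-1$ colours on $\fin(v')$, and exactly $k-1$ is achievable.
\end{enumerate}
So all arcs leaving $v'$ share the single remaining colour, which encodes the colour of $v$. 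Vertices of $G$ are represented by such vertices $v'$, not by arcs. An edge $vw$ becomes the arc $v'\to w'$, which conflicts with the pendant arc $w'\to w''$, so no edge gadget is needed. Starting from a triangle-free $G$ keeps $UG(D)$ triangle-free.

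\textbf{The case $k=2$.} You give no clause gadget and no variable--clause interface. The gadget you hint at, a \cref{fig:different}-type structure with an arc forced into both parts, cannot serve the partition variant. Any instance containing it as an induced subdigraph has $\popn{D}\ge 3$, so satisfiable formulas would map to no-instances.

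You need the opposite property, which the paper verifies: every arc of the construction lies on a directed $2$-path whose shortcut is absent, so every $2$-cover is already a partition. The paper's gadget has three parts:
\begin{itemize}
\item The clause gadget is the path $c^1\to c^2\to c^3\to c^4$ with shortcuts $c^1\to c^3$ and $c^2\to c^4$ but no $c^1\to c^4$. The three path arcs cannot all share a colour, and every non-constant pattern extends.
\item Connecting arcs $a(\ell^r)\to c^r$ force $c^r\to c^{r+1}$ to copy the colour of the variable arc $e(\ell^r)$.
\item Guard arcs $a(\ell^1)\to c^3$, $a(\ell^1)\to c^4$ and $a(\ell^2)\to c^4$ supply the transitive arcs needed when a connecting arc is followed by a shortcut of the same colour.
\end{itemize}
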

For $k=1$, both problems are polynomial-time solvable by checking acyclicity and transitivity.
Our reduction for $k=2$ is from \textsc{NAE-3-Sat} and applies simultaneously to covering and partitioning.
It shows that allowing transitive triangles makes the two-part problem hard even on DAGs, whereas the triangle-free case is polynomial-time solvable.
For $k\geq3$, we present a reduction from \textsc{$k$-Coloring} using partial-order-critical DAGs.
In \cref{sec:k2}, we present the reduction for $k=2$, and in \cref{sec:many-hardness}, we develop the partial-order-critical digraphs and show the reduction for $k\geq3$.
In \cref{sec:open}, we discuss remaining open problems.

% \paragraph{POD and POP are Different}
% Consider the DAG~$D$ shown in \cref{fig:different}. Due to missing transitive arcs, it has a unique decomposition into two partial orders. In this decomposition, the middle arc is forced to be in both partial orders. Therefore, the graph cannot be partitioned into two partial orders, and we have $\podn{D}=2$ and $\popn{D}=3$.

% \section{Preliminaries}
% \label{sec:preliminaries}
% \begin{itemize}
%     \item partial order 
%     \item Formal definition of Covering/ Coloring Variant and Pod number (minimizing/ deciscion variant of the problem -> introduction)
%     % \item NAE-3-SAT definition (maybe use triples insteat of conjunctions)
%     % \item chromatic number
%     % \item k-COLORING
%     % \item DAGs
%     % \item Underlying linegraph / line digraph / $UG(G)$ / $L(G)$
%     % \item pop-critical graphs
% \end{itemize}

\section{Hardness Result for \texorpdfstring{$k=2$}{k=2}}
    \label{sec:k2}
    This section is devoted to the following theorem.
    \begin{theorem}
        \label{thm:k2}
        The problems of deciding whether $\podn{G}\le2$ and 
        $\popn{G}\le2$ are both \NP-complete, even when $G$ is 
        restricted to be a directed acyclic graph.
    \end{theorem}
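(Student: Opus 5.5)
Membership in \NP{} is immediate: a certificate assigns to each arc a nonempty subset of $\{1,2\}$ (for covering) or a single element (for partitioning), and we check in polynomial time that each class is transitive; acyclicity is automatic on a DAG. For hardness we reduce from \textsc{NAE-3-Sat}, identifying the two partial orders with the truth values \emph{true} and \emph{false}. The basic constraint is the one from the introduction. Call two arcs $u\to v$ and $v\to w$ with $u\to w\notin E(D)$ a \emph{conflicting pair}. In any decomposition into two parts they lie in different parts. So if an arc lies in only one part, its partner's part is forced to be the other one. The other constraint comes from a transitive triangle $u\to v\to w$ with $u\to w\in E(D)$. If $u\to v$ and $v\to w$ share a part, then $u\to w$ must also be in that part. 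This is the only source of ``not-all-equal'' behaviour, and it is why the problem becomes polynomial once triangles are excluded: then only 2-colouring constraints remain.

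\textbf{Step 1: wires.} A chain of conflicting pairs (a zig-zag path of arcs where consecutive arcs meet head-to-tail without a shortcut) propagates a value alternately. The subtlety is that under covering an arc might lie in both parts. I would build wires so that each arc in them is forced into exactly one part. The gadget of \cref{fig:different} shows that an arc can have conflicting partners forcing both colours. Arranging things so that every wire arc has conflicting partners on only one side forces it into a single part in any cover. Any cover then restricts to a partition on wire arcs. For variable $x_i$ I take one long wire with many ``terminal'' arcs, each coloured $1$ if $x_i$ is true and $2$ otherwise, and with the opposite parity giving negated literals.

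\textbf{Step 2: clause gadget.} For a clause $(\ell_1,\ell_2,\ell_3)$ I would use a small DAG containing transitive triangles, built so that three literal arcs $a_1,a_2,a_3$, linked by conflicting-pair wires to the corresponding terminals, are forbidden from all receiving the same part, while every non-monochromatic pattern extends. A candidate is a directed path $p\to q\to r$ with shortcut $p\to r$, together with a second triangle in reversed roles. If $p\to q$ and $q\to r$ share colour $c$, the shortcut is forced into $c$. Attaching the third literal so that the shortcut must take the opposite colour of $a_3$ gives a contradiction exactly when all three agree. The symmetric copy handles the complementary colour, which keeps the constraint NAE-symmetric.

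\textbf{Step 3: correctness.} From an NAE assignment, colour everything and verify transitivity of both classes. This gives a partition, hence also a cover. Conversely, from a cover, Step 1 says wire and terminal arcs are single-coloured, so a truth assignment can be read off, and Step 2 ensures it is NAE-satisfying. Thus $\podn{D}\le2 \iff \popn{D}\le2 \iff$ the formula is NAE-satisfiable.

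\textbf{Main obstacle.} The hardest step is the clause gadget under the covering variant. There, arcs (in particular triangle shortcuts) may lie in both orders, and gluing gadgets can create unintended new transitive paths or new conflicts. I would first try to design the gadget by hand so that every arc incident to the triangle has a conflicting partner of each forced colour except where intended. I would then verify the gadget exhaustively, since it is constant-size. I would also check that wires attach only at heads/tails in a way that creates no new length-2 paths between gadgets.
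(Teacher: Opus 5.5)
Your outline follows the paper's strategy. You reduce from \textsc{NAE-3-Sat}, use $2$-paths without shortcut (your conflicting pairs) to force opposite colours, and use transitive triangles for the not-all-equal constraint. Your candidate clause gadget, completed to a path $p\to q\to r\to s$ with shortcuts $p\to r$, $q\to s$ and no $p\to s$, is the paper's $c_j^1\to c_j^2\to c_j^3\to c_j^4$. In that completion the second triangle is not there for ``the complementary colour'', since the one-triangle argument is already colour-symmetric. It is there because without $q\to s$ the arcs $q\to r$ and $r\to s$ would conflict, and the gadget would forbid $a_2=a_3$.

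The gap is that, for a hardness reduction, the construction and its verification are the proof, and you leave open exactly the delicate part: how literal arcs are tied to the variable gadgets. The one concrete intention you state, attaching so as to create ``no new length-2 paths'', cannot be realized. A conflicting-pair link to the middle arc $q\to r$ needs an arc $t\to q$ or $r\to t$. That arc also forms a $2$-path with $q\to s$, resp.\ $p\to r$. If this path has no shortcut, it forces $q\to s$ (resp.\ $p\to r$) to the colour of $q\to r$. Through the conflicts $p\to q\,/\,q\to s$ and $p\to r\,/\,r\to s$, this imposes $a_1\neq a_2$ (resp.\ $a_2\neq a_3$) and rejects NAE-valid patterns. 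Attaching $a_1$ at its tail $p$ likewise forces $a_1\neq a_3$ via $p\to r$.

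The missing idea is the paper's \emph{guard arcs} $a(\ell_j^1)\to c_j^3$, $a(\ell_j^1)\to c_j^4$, $a(\ell_j^2)\to c_j^4$. They turn these unintended $2$-paths into transitive triangles. Their third arc automatically gets the right colour, because, like the connecting arc, it conflicts with $e(\ell)$. Some shortcut of this kind is unavoidable at least for the middle literal. Without it, the direction ``NAE assignment $\Rightarrow$ valid colouring'' fails.

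Your handling of the covering variant also rests on a misreading. An arc with even one conflicting partner can never lie in both parts of a $2$-cover, since its partner would then fit in neither part. The central arc in \cref{fig:different} lies in both orders because it is the forced shortcut of two differently coloured transitive triangles, and it has no conflicting partner at all.

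The correct form of your Step 1 is the paper's global observation. Every arc of the instance lies on some $2$-path whose shortcut is absent, including the clause shortcuts, connecting arcs and guard arcs. So covers and partitions coincide on the whole instance, and the ``main obstacle'' you anticipate (shortcuts in both orders, exhaustive checking) disappears. Long wires are also unnecessary: all connecting and guard arcs of a literal can leave the single vertex $a(\ell)$, the head of $e(\ell)$.
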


    We reduce from \textsc{Not-All-Equal-3-Sat}, abbreviated as \textsc{NAE-3-Sat}.
    An instance $\Phi=( X, C)$ consists of a set $X=\{x_1,\ldots,x_n\}$ of variables and 
    a family $C=(C_1,\ldots,C_m)$ of triples of literals, where
    $C_j=(\ell_j^1,\ell_j^2,\ell_j^3)$. We call these triples clauses.
    A truth assignment $\varphi\colon X\to\{0,1\}$ extends to literals by
    $\varphi(\lnot x_i)=1-\varphi(x_i)$. It satisfies a clause if the clause
    contains both a true and a false literal, and it satisfies $\Phi$ if
    it satisfies every clause. Equivalently, we require
\[
    \{\varphi(\ell_j^1),\varphi(\ell_j^2),\varphi(\ell_j^3)\}
    =\{0,1\}
    \qquad \forall~C_j,~j\in\{1,\ldots,m\}.
\]
The problem asks whether such an assignment exists. We use the
\NP-complete variant in which each clause contains literals over
three distinct variables~\cite{schaefer_complexity_1978}.
The order within each triple is arbitrary but fixed for the construction.

\begin{proof}[Proof of \Cref{thm:k2}]
    Both problems are clearly in \NP\ since we can verify a solution in
    polynomial time. 
    Let $\Phi=(X, C)$ be an instance of \textsc{NAE-3-Sat} as
    defined above. We construct a directed acyclic graph $G$ as follows.
    The same construction will work for both the covering and the partitioning variants, as will be discussed later in the proof.
    For each variable~$x_i$, create a variable gadget $\mathcal{X}_i$ that consists of a directed 
    path $x_i^0 \rightarrow x_i^1 \rightarrow x_i^2$. Since the shortcut $x_i^0 \rightarrow  x_i^2$ is absent, 
    transitivity prevents the two arcs of this path from belonging to the same partial order. Consequently, 
    in every decomposition into two partial orders, they belong to different partial orders. The variable gadget therefore 
    has two possible states, corresponding to the two possible truth values of $x_i$.
    For each clause $C_j$, we create a clause gadget ${\cal C}_j$ that consists of a 
    directed path $c_j^1 \rightarrow c_j^2 \rightarrow c_j^3 \rightarrow c_j^4$ 
    together with the shortcut arcs $c_j^1 \rightarrow  c_j^3, c_j^2 \rightarrow  c_j^4$.
    In particular, the arc $c_j^1 \rightarrow  c_j^4$ is not present. The three consecutive path arcs
    will encode the truth values of the three literals in each clause. In particular, 
    they cannot all belong to the same partial order, since transitivity
    would then require the existence of $c_j^1 \rightarrow c_j^4$. The gadgets are shown in \cref{fig:gadgets}.
    \begin{figure}[ht]
        \centering
        \includegraphics[page=1, width=\textwidth]{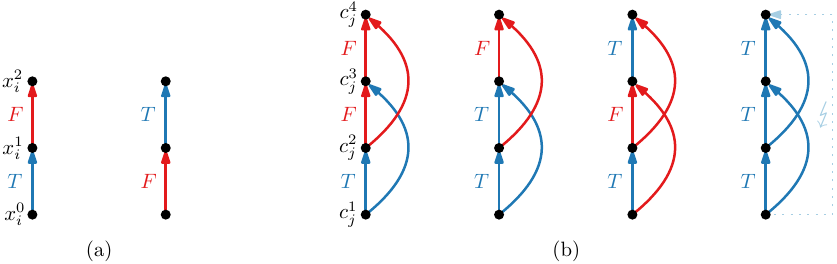}
        \caption{The gadgets used in the reduction. (a) The two states of a variable gadget. 
        (b) Representative colorings of a clause gadget. The three consecutive path arcs encode the truth values of the
        respective literals in the clause. A monochromatic clause path would require the missing shortcut $c_j^1 \rightarrow c_j^4$, shown dotted.}
        \label{fig:gadgets}
    \end{figure}
    We connect the gadgets as follows: for a literal $\ell$, define its attachment vertex $a(\ell)$ by
    $a(\ell) = x_i^1$ if $\ell = x_i$ and $a(\ell) = x_i^2$ if $\ell = \lnot x_i$. We also write
    $e(\ell)$ for the variable-gadget arc entering $a(\ell)$.
    That is, for $\ell = x_i$ we have $e(\ell) = x_i^0 \rightarrow  x_i^1$ and for $\ell = \lnot x_i$ we have $e(\ell) = x_i^1 \rightarrow  x_i^2$.
    For each clause $C_j$, we add three \emph{connecting arcs} $a(\ell_j^1) \rightarrow c_j^1$, $a(\ell_j^2) \rightarrow c_j^2$, and $a(\ell_j^3) \rightarrow c_j^3$. %To transmit the truth values of the literals to the clause gadgets
    Furthermore, we add three \emph{guard arcs} $a(\ell_j^1) \rightarrow c_j^3$, $a(\ell_j^1) \rightarrow c_j^4$, and $a(\ell_j^2) \rightarrow c_j^4$.
    An example is shown in \cref{fig:phi}.

    \begin{figure}[ht]
        \centering
        \includegraphics[page=2, width=\textwidth]{figures/gadgetes.pdf}
        \caption{The construction with clauses $(x_1,x_2,x_3)$, $(\lnot x_1, x_2,\lnot x_4)$, $(x_1,\lnot x_2,x_4)$, $(\lnot x_1,\lnot x_3,\lnot x_4)$. The coloring corresponds to the NAE assignment $x_1=x_3=1$ and $x_2=x_4=0$. Guard arcs are shown dotted.}
        \label{fig:phi}
    \end{figure}
    Note that $G$ contains exactly $3n +4m$ vertices and $2n +11m$ arcs. Clearly, $G$ can be constructed from $\Phi$ in polynomial time. Note also that $G$ contains no cycles. We also observe that, on
    the constructed instances, a cover by two partial orders exists if and only if a partition into two partial orders exists. Indeed, every arc of $G$ belongs to a
    directed $2$-path whose shortcut is absent. This is obvious for the arcs belonging
    to variable gadgets. Each connecting or guard arc leaving $a(\ell_j^r)$ forms such a path with
    $e(\ell_j^r)$. Each clause-path arc $c_j^r \rightarrow c_j^{r+1}$ forms such a path with
    $a(\ell_j^r) \rightarrow c_j^r$. Finally, each of the two clause-gadget shortcuts lies on a $2$-path 
    from $c_j^1$ to $c_j^4$, whose shortcut is absent. Consequently, $G$ admits a 
    cover by two partial orders if and only if it admits a partition into two partial orders.
    We may therefore treat both variants as an arc coloring with colors blue and red in which
    each color class is transitive. The preceding observations also imply that, in every valid such coloring,
    all connecting and guard arcs leaving $a(\ell_j^r)$ have the color opposite to $e(\ell_j^r)$, while
    $c_j^r \rightarrow c_j^{r+1}$ has the same color as $e(\ell_j^r)$.

    Suppose first that $\varphi$ satisfies $\Phi$.
    If $x_i$ is true, color $x_i^0 \rightarrow  x_i^1$ blue and $x_i^1 \rightarrow  x_i^2$ red; 
    otherwise reverse the colors. Thus, $e(\ell)$ is blue if 
    and only if $\varphi(\ell)=1$. 
    Color every connecting and guard arc leaving $a(\ell_j^r)$ opposite to $e(\ell_j^r)$,
    and color the clause-path arc $c_j^r \rightarrow c_j^{r+1}$ with the same color as $e(\ell_j^r)$.
    Finally, color $c_j^1 \rightarrow  c_j^3$ opposite to $c_j^3 \rightarrow  c_j^4$ and 
    $c_j^2 \rightarrow  c_j^4$ opposite to $c_j^1 \rightarrow  c_j^2$. 

    We verify transitivity by considering monochromatic $2$-paths. 
    No such path lies within a variable gadget or consists of a variable 
    arc followed by a connecting or guard arc. Within a clause gadget, if
    two path arcs have the same color, the third has
    the opposite color because $\varphi$ satisfies the clause. Moreover,
    all $2$-paths within the clause gadget involving the shortcuts $c_j^1 \rightarrow  c_j^3$ and
    $c_j^2 \rightarrow  c_j^4$ are not monochromatic by definition.
    For every connecting arc $a(\ell_j^r) \rightarrow c_j^r$ followed by an arc $c_j^r \rightarrow  c_j^{r+1}$ of the clause path, the two arcs have opposite colors, so the $2$-path is not monochromatic.
    Furthermore, there are two cases where a connecting arc is followed by a shortcut arc of the clause gadget: if~$a(\ell_j^1) \rightarrow c_j^1$ and~$c_j^1 \rightarrow c_j^3$ are monochromatic, then the transitive arc~$a(\ell_j^1)\rightarrow c_j^3$ must also have the same color, which is guaranteed by the coloring of the guard arcs. Similarly, if~$a(\ell_j^2) \rightarrow c_j^2$ and~$c_j^2 \rightarrow c_j^4$ are monochromatic, then the transitive guard arc~$a(\ell_j^2) \rightarrow c_j^4$ has the same color.
    Finally, if~$a(\ell_j^1) \rightarrow c_j^3$ and~$c_j^3 \rightarrow c_j^4$ are monochromatic, then the transitive guard arc~$a(\ell_j^1) \rightarrow c_j^4$ also has the same color.
    % Furthermore, all connecting arcs followed by a path arc of a clause gadget also form $2$-paths that are not monochromatic. 
    % Finally, a connecting arc followed by a shortcut of a clause gadget is either not monochromatic, or if it is, then the shortcut is one of the guard arcs and colored with the same color.
    % Lastly, a guard arc may be followed by a path arc of the same color. However, then there is also a guard arc as the transitive arc, with the same color.
        
    Conversely, suppose that $G$ admits a valid blue-red arc
    coloring. Define $\varphi(x_i)=1$ if and only if $x_i^0 \rightarrow x_i^1$
    is blue. The two variable-gadget arcs have opposite colors, so
    $e(\ell)$ is blue if and only if $\varphi(\ell)=1$. By the forced
    color relations established above, the three clause-path arcs encode
    the values of $\ell_j^1,\ell_j^2,\ell_j^3$. They cannot all have
    the same color, since transitivity would then force the missing arc
    $c_j^1 \rightarrow c_j^4$. Hence, each clause contains both a true and a false
    literal, and $\varphi$ satisfies $\Phi$.
    \end{proof}

We were able to use a similarly constructed reduction to prove \NP-hardness for $k=3$, but not for $k \geq 4$. Two issues occur when trying this approach for $k=4$: gadgets are difficult to find, and the relation between the decomposition and the partition (coloring) variants becomes unclear.
To address these problems, we present a completely different reduction that proves hardness for $k \geq 3$, but does not seem to be applicable in the case $k=2$.

\section{Hardness Result for \texorpdfstring{$k\geq 3$}{k >= 3}}
\label{sec:many-hardness}

The main theorem of this section is the following one.

\begin{theorem}
  \label{thm:k3}
  For every fixed integer $k\geq3$,
  the problems of deciding whether $\podn{G}\leq k$ and 
    $\popn{G}\leq k$ are both \NP-complete, even when $G$ is 
    restricted to be a directed acyclic graph whose underlying graph is
    triangle-free.
\end{theorem}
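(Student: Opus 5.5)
Membership in \NP\ is immediate, so the work is hardness. I would reduce from \textsc{$k$-Coloring}, following the hint in the introduction that the reduction uses partial-order-critical DAGs. The base observation is this: in a DAG whose underlying graph is triangle-free, no arc $u\to w$ can exist together with a $2$-path $u\to v\to w$. A set of arcs is therefore a partial order exactly when it contains no directed $2$-path. Consequently, in such graphs $\podn{D}=\popn{D}$, because any subset of a $2$-path-free arc set is again $2$-path-free, so a cover can be trimmed to a partition. Both variants then reduce to a single question: can the arcs be coloured with $k$ colours so that consecutive arcs $u\to v\to w$ receive different colours? This is exactly proper $k$-colouring of the line digraph $\linedg{D}$, viewed as an undirected graph.

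Next I would build the critical gadget, a \podcrit DAG $H_k$ with a triangle-free underlying graph, satisfying two conditions:
\begin{enumerate}
  \item $\popn{H_k}>k$;
  \item deleting one designated arc $e^*$ leaves a digraph that is $k$-colourable in the above sense, and in every such colouring the two arcs playing the role of the endpoints of $e^*$ get the same colour.
\end{enumerate}
Equivalently, I want a gadget $\mathcal{E}$ attached at two arcs $f,g$ that forces $f$ and $g$ to receive \emph{different} colours in every valid $k$-colouring, while any pair of different colours on $f,g$ extends to the whole gadget.

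To construct it I would use the iterated line digraph. The known identity $\chi(\linedg{D})\ge\log_2\chi(D)$, up to rounding, lets one start from a DAG of large chromatic number, for example a shift-graph-like construction or a transitive tournament, and pass to line digraphs. This yields DAGs with $\popn{}$ exceeding $k$. Taking a line digraph of a triangle-free DAG whose paths have length at least $2$ keeps the underlying graph triangle-free and acyclic. From such a graph I take an arc-minimal subgraph with $\popn{}>k$; this is \podcrit by definition. Removing one arc $e^*=u\to w$ makes it $k$-colourable, and criticality forces the arcs entering $u$ and leaving $w$ to interact. I would then splice this in as an ``inequality'' gadget between two arcs.

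The reduction itself goes as follows. Given a graph $G$ to be $k$-coloured, represent each vertex $v$ by one arc $a_v$ in its own disjoint component. For each edge $vw$, attach a copy of the inequality gadget forcing $a_v$ and $a_w$ to differ. The attachment should be made by identifying gadget endpoints with heads or tails of the $a_v$ in such a way that no new triangles and no cycles arise, for instance by orienting everything consistently along a topological order. A proper colouring of $G$ then extends gadget by gadget, and conversely the colour of $a_v$ gives a $k$-colouring of $G$.

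The main obstacle is the gadget. Criticality only says that $H_k-e^*$ is colourable. It does not say that \emph{every} colouring of $H_k-e^*$ puts conflicting colours at the ends of $e^*$, nor that all colour pairs are realisable. I expect to need a careful choice of critical graph, ideally with enough symmetry that permuting colours is possible, so that any ordered pair of distinct colours on the two attachment arcs extends. A second difficulty is that the attachment arcs must not create extra $2$-paths between different gadgets sharing $a_v$. To handle this I would attach gadgets only at the head of $a_v$, using fresh out-arcs, which can always be coloured, and use a simple equality-forcing chain where needed.
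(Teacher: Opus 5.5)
There is a genuine gap: the gadget carries the whole hardness argument, and you never construct it. Your preliminary steps are fine and match the paper: NP membership, $\podn{D}=\popn{D}$ on triangle-free DAGs, the translation to proper $k$-colourings of $UG(\linedg{D})$, and obtaining $(k+1)$-pop-critical triangle-free DAGs as arc-minimal subgraphs.

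Condition~2 does not follow from criticality. Deleting an arc $e^*=u\to w$ of $D$ deletes a \emph{vertex} of $UG(\linedg{D})$, not an edge. So the classical trick ``a critical graph minus an edge forces its endpoints to share a colour'' has no analogue here, and there are no two arcs ``playing the role of the endpoints of $e^*$''. Your ``equivalent'' reformulation also switches from equal to different colours. What criticality actually gives is that in every $k$-colouring of $D\setminus e^*$, the line-graph neighbourhood $\fin(u)\cup\fout(w)$ of $e^*$ receives all $k$ colours.

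Your fallback of ``a simple equality-forcing chain'' does not exist for $k\geq3$. Consecutive arcs only impose inequalities, and inequalities along a chain never force equality with three or more colours. Forcing many arcs to share a colour is precisely the difficulty. Conversely, realising every ordered pair of distinct colours is automatic by permuting colours, so no symmetry of the critical graph is needed.

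The missing idea is to delete an arc $u\to s$ into a sink $s$. Its only line-graph neighbours are then $\fin(u)$, so in every $k$-colouring of $D'=D\setminus(u\to s)$ the set $\fin(u)$ receives all $k$ colours. Hence $u$ is a sink of $D'$, since any out-arc of $u$ would be adjacent to all $k$ colours. Take an inclusion-minimal subset of $\fin(u)$ with this property and remove one arc. The result is a set $S$ that receives at least $k-1$ colours in every $k$-colouring and exactly $k-1$ in some.

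Because $u$ is a sink, each arc $w\to u\in S$ can be copied to an arc $w\to v'$ into a fresh vertex $v'$ with the same line-graph neighbourhood $\fin(w)$. So $\fin(v')$ inherits the property of $S$, and all out-arcs of $v'$ are forced to carry one common colour. This is the equality mechanism you lack.

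The reduction then represents each vertex $v$ of $G$ by such a vertex $v'$, not by an arc. Add a pendant arc $v'\to v''$, and for every edge $vw$ with $v$ before $w$ in a fixed order add the arc $v'\to w'$. The $2$-path $v'\to w'\to w''$ forces the colours of $v$ and $w$ to differ. Colour permutations within each gadget give the converse direction. Triangle-freeness of $G$ keeps $UG(D)$ triangle-free.
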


In the following, we start by investigating some properties of both the partial order decomposition and partition numbers and their relation to colorings of line digraphs. We further introduce \podcrit digraphs and the properties needed for our reduction.

\subsection{Structural Properties}

The \emph{underlying graph} $UG(D)$ of a digraph $D$ is defined as a graph with $V(UG(D)) = V(D)$ and $vw \in E(UG(D))$ if $(v \rightarrow  w) \in E(D)$ or $(w \rightarrow  v) \in E(D)$.
Given a digraph $D = (V,E)$, the \emph{line digraph} $\linedg{D}$ has vertex set $E$ and arc set $\left\{ (u \rightarrow  v, v \rightarrow  w) : (u \rightarrow  v), (v \rightarrow  w) \in E \right\}$.
Informally, whenever two arcs of $D$ meet head-to-tail, there is an arc between them in the natural direction in $\linedg{D}$.
We will use the following result on line digraphs.
\begin{theorem}[Erd\H{o}s and Hajnal~\cite{erdoshajnal}, Harner and Entringer \cite{harner1972arc}]\label{thm:harner}
    For any digraph $D$, we have 
    \[ \chi\left(UG(\linedg{D})\right) \geq \log_2(\chi(UG(D)))\text{.} \]
\end{theorem}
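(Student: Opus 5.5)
The plan is to turn a proper coloring of $UG(\linedg{D})$ with few colors into a proper coloring of $UG(D)$ with at most exponentially many colors. Let $k=\chi(UG(\linedg{D}))$ and fix a proper coloring $c\colon E(D)\to\{1,\ldots,k\}$ of $UG(\linedg{D})$. Concretely, $c$ assigns colors to the arcs of $D$ so that any two arcs meeting head-to-tail, $u\rightarrow v$ and $v\rightarrow w$, receive different colors. It then suffices to show $\chi(UG(D))\leq 2^k$, since taking $\log_2$ of both sides gives the claimed inequality.

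For each vertex $v$ of $D$, I will define the set
\[
  S(v)=\{\,c(v\rightarrow w) : (v\rightarrow w)\in E(D)\,\}\subseteq\{1,\ldots,k\},
\]
the set of colors on the arcs leaving $v$. This gives a map from $V(D)$ into the power set of $\{1,\ldots,k\}$, which has $2^k$ elements. I claim this map is a proper coloring of $UG(D)$.

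Take any arc $u\rightarrow v$ of $D$ and let $\alpha=c(u\rightarrow v)$. By definition, $\alpha\in S(u)$. Every arc leaving $v$ meets $u\rightarrow v$ head-to-tail, so it is adjacent to $u\rightarrow v$ in $\linedg{D}$ and has a color different from $\alpha$. Hence $\alpha\notin S(v)$, and therefore $S(u)\neq S(v)$. Since every edge of $UG(D)$ comes from such an arc in one direction or the other, adjacent vertices of $UG(D)$ receive distinct sets. So $\chi(UG(D))\leq 2^k$.

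There is no real obstacle. The key idea is choosing the out-color set as the new color. The only care needed is with degenerate cases: digraphs are loopless, so $UG(D)$ has no loops, and an arcless $D$ has $\chi(UG(D))\leq 1$, making the bound trivial.
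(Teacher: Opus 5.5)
Your proof is correct: the color of $u\rightarrow v$ lies in $S(u)$ but not in $S(v)$, so $v\mapsto S(v)$ is a proper coloring of $UG(D)$ with at most $2^{\chi(UG(\vec{L}(D)))}$ colors, and your treatment of the degenerate cases is fine. The paper does not prove this statement; it only cites Erd\H{o}s--Hajnal and Harner--Entringer, and your out-color-set argument is essentially the classical proof from those sources.
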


For convenience, we call a digraph $D$ a triangle-free DAG if it is a directed acyclic graph whose underlying undirected graph is triangle-free.
As in the reduction given in \cref{sec:k2}, we refer to partial orders in a partition as color classes. We abuse notation and denote by $c(E')$ the set of colors used by the coloring $c$ on a subset of arcs $E' \subseteq E(D)$.

As described in the introduction, the partial order decomposition number and the partial order partition number can differ even for DAGs. As the following lemma shows, this separation is not possible for triangle-free DAGs.

\begin{lemma}
    \label{lem:podpopequal}
    For every triangle-free DAG $D$, we have $\podn{D} = \popn{D}$.
\end{lemma}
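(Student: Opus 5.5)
Since every partition into partial orders is in particular a cover, we always have $\podn{D}\le\popn{D}$, so only the reverse inequality needs proof. The key observation is that in a triangle-free DAG $D$, transitivity of a subdigraph degenerates to a purely local condition: a set of arcs $F\subseteq E(D)$ is transitive if and only if it contains no directed $2$-path $u\to v\to w$. Indeed, if $F$ contained such a path, transitivity would require $u\to w\in F\subseteq E(D)$, and then $u,v,w$ would span a triangle in $UG(D)$, which is impossible. Conversely, if $F$ has no directed $2$-path, the transitivity condition is vacuous. Since every subdigraph of a DAG is acyclic, a subset $F\subseteq E(D)$ is (the arc set of) a partial order on $V(D)$ exactly when it has no two arcs meeting head-to-tail. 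In the language introduced above, the partial orders contained in $D$ are exactly the independent sets of $UG(\linedg{D})$.

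Given this, I will take an optimal cover $Q_1,\ldots,Q_k$ with $k=\podn{D}$ and turn it into a partition. For each arc $e\in E(D)$, pick the smallest index $i$ with $e\in E(Q_i)$, and put $e$ into a new class $P_i$. The classes $P_1,\ldots,P_k$ are pairwise disjoint and cover $E(D)$. Each $P_i\subseteq E(Q_i)$ contains no directed $2$-path, because $E(Q_i)$ contains none: by the previous paragraph, $E(Q_i)$ is a transitive subset of the triangle-free DAG and hence $2$-path-free. Subsets of $2$-path-free sets are $2$-path-free, so by the characterization each $P_i$ is again a partial order. Hence $\popn{D}\le k=\podn{D}$. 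Equivalently, both numbers equal $\chi(UG(\linedg{D}))$, which may be worth recording for later use with \cref{thm:harner}.

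The only delicate point is the characterization step, specifically making sure no hidden requirement survives. A partial order on $V(D)$ with arc set $F$ needs only irreflexivity, acyclicity, and transitivity. The first two are inherited from $D$, so I expect no real obstacle. The contrast with \cref{fig:different} is exactly that deleting arcs can break transitivity only when a shortcut arc is present, and the triangle-free hypothesis rules out shortcut arcs.
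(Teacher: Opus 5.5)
Your proof is correct and takes essentially the same route as the paper. Both arguments use that triangle-freeness prevents any arc of a partial order inside $D$ from being the shortcut of a directed $2$-path. Hence arc sets of partial orders are closed under taking subsets, and arcs covered more than once can simply be discarded to turn the cover into a partition. The only difference is that you prove the full characterization up front (partial orders in $D$ are exactly the $2$-path-free arc sets, i.e.\ the independent sets of $UG(\vec{L}(D))$), which the paper states separately as the next lemma. The paper's proof itself uses only the weaker local fact that deleting a single arc from some $Q_i$ preserves transitivity.
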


\begin{proof}
    Let $D$ be a triangle-free DAG. Since every partition of $D$ is also a cover, we have $\podn{D} \leq \popn{D}$.
    Let $k = \podn{D}$ and let $Q_1 \cup \ldots \cup Q_k$ be a decomposition of $D$. Let $u \rightarrow  v$ be an arc in $Q_i$. 
    Since $UG(Q_i)$ is triangle-free, there is no $w$ such that both $u \rightarrow  w, w \rightarrow  v$ are in $Q_i$.
    It follows that $Q_i \setminus (u \rightarrow v)$ remains a partial order.
    We can therefore delete an arc covered by multiple partial orders from any of them.
    This allows us to transform the decomposition into a partition, so $\popn{D} \leq \podn{D}$.
\end{proof}

This lemma allows us to show hardness for both variants while only considering the partition variant.
We continue with a lemma that relates the partial order partition number to the chromatic number of the underlying graph of the line digraph. Before stating the lemma, we give a necessary and sufficient condition for $Q$ to be a partial order. By definition, a DAG $Q$ is a partial order if and only if \[(u \rightarrow  v), (v \rightarrow  w) \in E(Q) \implies (u \rightarrow  w) \in E(Q)\text{.}\]
We can also characterize partial orders of DAGs in terms of forbidden induced subgraphs.
Note that if two arcs $u\rightarrow v$ and
$v \rightarrow w$ meet head-to-tail in a DAG, the vertices $u,v,w$ induce either a directed path $P_3$ or a transitive tournament~$TT_3$.
Thus, a DAG is a partial order if and only if it contains no $P_3$ as an induced subgraph.

\begin{lemma}
    Let $D$ be a triangle-free DAG and $Q \subseteq E(D)$. Then, $Q$ is a partial order if and only if $Q$ is an independent set in $UG(\linedg{D})$.
\end{lemma}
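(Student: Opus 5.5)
We prove both directions directly from the characterization stated just before the lemma. A DAG is a partial order if and only if, whenever it contains arcs $u\rightarrow v$ and $v\rightarrow w$, it also contains $u\rightarrow w$. Here we view $Q$ as the spanning subdigraph of $D$ with arc set $Q$. Since $D$ is acyclic, $Q$ is acyclic as well, so only transitivity needs to be checked. Also, $UG(\linedg{D})$ has an edge between two arcs $e,f\in E(D)$ exactly when they meet head-to-tail in $D$, in either order.

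For the direction where $Q$ is independent, suppose $Q$ contains no two arcs meeting head-to-tail. Then the premise of the transitivity condition is never met, so $Q$ is vacuously transitive and hence a partial order. This direction does not need triangle-freeness.

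For the converse, suppose $Q$ is a partial order but is not independent in $UG(\linedg{D})$. Then some edge of $UG(\linedg{D})$ joins two arcs of $Q$, say $u\rightarrow v$ and $v\rightarrow w$. These two arcs are distinct, since $D$ is acyclic and so has no loops. Acyclicity also gives $u\neq w$, because otherwise $u\rightarrow v\rightarrow u$ would be a directed cycle. By transitivity, $u\rightarrow w\in Q\subseteq E(D)$. But then $u,v,w$ are three distinct vertices that are pairwise adjacent in $UG(D)$, forming a triangle. This contradicts the assumption that $D$ is a triangle-free DAG.

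The only point that needs care is checking that $u,v,w$ are distinct, so that the three arcs really give a triangle. Both distinctness claims follow from acyclicity of $D$, so I do not expect any real obstacle in this proof.
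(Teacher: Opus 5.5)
Your proof is correct and follows the paper's argument. The reverse direction is vacuous transitivity, and in the forward direction two head-to-tail arcs in $Q$ force the shortcut arc. The paper phrases this as ruling out an induced $P_3$ by transitivity and a $TT_3$ by triangle-freeness. Your explicit check via acyclicity that $u,v,w$ are distinct adds a little care that the paper leaves implicit.
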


\begin{proof}
    $(\Rightarrow)$ Since $Q$ is a partial order, there is no induced~$P_3$, as a~$P_3$ violates transitivity.
    Since $UG(Q)$ is triangle-free, there is also no $TT_3$ in $Q$. Therefore, $Q$ does not contain two arcs meeting head-to-tail, so $Q$ is an independent set in $UG(\linedg{D})$.\\
    $(\Leftarrow)$ Since $Q$ is an independent set in $UG(\linedg{D})$, $Q$ does not contain a three-vertex directed path as a subgraph, which proves the claim.
\end{proof}

A proper coloring of $UG(\linedg{D})$ is a partition of $V(UG(\linedg{D}))$ into independent sets, which leads to the following conclusion.

\begin{corollary}
    \label{cor:chromaticpoppod}
    For every triangle-free DAG $D$, we have \[\popn{D} = \podn{D} = \chi(UG(\vec{L}(D)))\text{.}\]
\end{corollary}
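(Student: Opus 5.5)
The corollary combines \cref{lem:podpopequal} with the preceding lemma; the plan is to prove the two inequalities $\popn{D}\le\chi(UG(\vec{L}(D)))$ and $\chi(UG(\vec{L}(D)))\le\popn{D}$ separately.

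Fix a triangle-free DAG $D$. By \cref{lem:podpopequal} we already have $\podn{D}=\popn{D}$, so it suffices to show $\popn{D}=\chi(UG(\vec{L}(D)))$. Recall that the vertex set of $UG(\vec{L}(D))$ is exactly $E(D)$. Hence a partition of $E(D)$ into classes $Q_1,\ldots,Q_k$ is simultaneously a candidate partial order partition of $D$ and a candidate vertex coloring of $UG(\vec{L}(D))$. The preceding lemma says that, since $D$ is a triangle-free DAG, a class $Q_i\subseteq E(D)$ is a partial order if and only if it is an independent set in $UG(\vec{L}(D))$.

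For the inequality $\chi(UG(\vec{L}(D)))\le\popn{D}$, take an optimal partition of $E(D)$ into $\popn{D}$ partial orders. By the lemma, each part is independent in $UG(\vec{L}(D))$. Assigning each arc the index of its part therefore gives a proper coloring with $\popn{D}$ colors.

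Conversely, take a proper coloring of $UG(\vec{L}(D))$ with $\chi$ colors. Each color class is independent, hence by the lemma a partial order. Interpreted on $V(D)$, the nonempty color classes are pairwise disjoint and together cover $E(D)$, so they form a partition into at most $\chi$ partial orders. This gives $\popn{D}\le\chi(UG(\vec{L}(D)))$.

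The only subtlety is degenerate cases, such as $D$ having no arcs, where both sides are $0$ (or trivially equal under whatever convention is used). Apart from that there is no real obstacle: all the work is contained in the preceding lemma and \cref{lem:podpopequal}. The one point to state explicitly is that a subset of arcs, viewed as a digraph on $V(D)$, is a partial order exactly when that lemma applies, which also holds for empty classes.
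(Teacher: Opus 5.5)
Your proposal is correct and follows the paper's route. You take $\mathrm{pop}(D)=\mathrm{pod}(D)$ from \cref{lem:podpopequal}, and you get $\mathrm{pop}(D)=\chi(UG(\vec{L}(D)))$ from the preceding lemma, since partitions of $E(D)$ into partial orders are exactly partitions of $V(UG(\vec{L}(D)))$ into independent sets; your two inequalities spell out what the paper says in one sentence.
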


We can now use properties of proper colorings to investigate the partial order partition number in triangle-free DAGs.

\begin{lemma}
    \label{lem:decreasepop}
    Let $D$ be a triangle-free DAG, let $e \in E(D)$ and let $D' = D \setminus e$. Then $D'$ is a triangle-free DAG and $\popn{D}-1 \leq \popn{D'} \leq \popn{D}$.
\end{lemma}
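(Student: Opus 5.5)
The plan is to move everything to vertex colorings through \cref{cor:chromaticpoppod}. Before that, we check that $D'$ is again a triangle-free DAG. Deleting an arc creates no directed cycle, so $D'$ is acyclic. The underlying graph $UG(D')$ is a subgraph of $UG(D)$; it is either equal to it or equal to it with one edge removed. Hence $UG(D')$ is triangle-free. The corollary therefore applies to both digraphs:
\[
\popn{D}=\chi\left(UG(\linedg{D})\right) \quad\text{and}\quad \popn{D'}=\chi\left(UG(\linedg{D'})\right).
\]

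Next we relate the two line digraphs. By definition, $\linedg{D'}$ has vertex set $E(D)\setminus\{e\}$. Its arcs are exactly the pairs of arcs of $D'$ that meet head-to-tail, so $\linedg{D'}$ is the subdigraph of $\linedg{D}$ induced on $E(D)\setminus\{e\}$. In other words, $\linedg{D'}$ is $\linedg{D}$ with the single vertex $e$ deleted. Taking underlying graphs commutes with vertex deletion, so
\[
UG(\linedg{D'}) = UG(\linedg{D}) - e.
\]

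Finally we apply the standard facts about deleting one vertex from a graph $H$. The first fact is $\chi(H-x)\leq\chi(H)$: restrict a proper coloring of $H$ to $H-x$. The second is $\chi(H)\leq\chi(H-x)+1$: take an optimal coloring of $H-x$ and give $x$ a fresh color. These give $\popn{D}-1\leq\popn{D'}\leq\popn{D}$.

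The upper bound can also be seen directly. Removing an arc from a partial order whose underlying graph is triangle-free leaves a partial order; this is the argument in the proof of \cref{lem:podpopequal}. So each part of a partition of $D$ restricts to a part of a partition of $D'$. The lower bound is also direct: add $\{e\}$ as a new one-arc partial order.

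There is no real obstacle. The only points needing care are that $D'$ stays a triangle-free DAG, so that the corollary applies, and the identification of $\linedg{D'}$ with the vertex-deleted line digraph.
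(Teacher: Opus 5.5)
Your proposal is correct and follows the paper's proof: you check that $D'$ remains a triangle-free DAG, use \cref{cor:chromaticpoppod} to pass to $\chi(UG(\linedg{D}))$, and note that deleting the vertex $e$ changes the chromatic number by at most one. Your extra details, namely identifying $\linedg{D'}$ with $\linedg{D}$ minus the vertex $e$ and the direct partition arguments, fill in steps the paper leaves implicit.
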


\begin{proof}
    We first observe that $D'$ remains acyclic and $UG(D')$ remains triangle-free after the removal of an arc $e \in E(D)$. It follows that $D'$ is a triangle-free DAG.
    By \cref{cor:chromaticpoppod}, we have $\chi(UG(\linedg{D})) = \popn{D}$, and therefore removing a vertex from $UG(\linedg{D})$ can decrease its chromatic number by at most $1$ while not increasing it.
\end{proof}

The authors of \cite{aboulker2026} already showed that there exist triangle-free DAGs with an arbitrarily large partial order decomposition number.
We include a proof for completeness.

\begin{proposition}
    \label{lem:large-pod-triangle-free}
    For every integer $k$ there exists a triangle-free DAG $D$ that has $\podn{D} \geq k$.
\end{proposition}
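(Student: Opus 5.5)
We will use \cref{cor:chromaticpoppod}, which reduces the statement to finding a triangle-free DAG $D$ whose line digraph has large chromatic number:
\[
\podn{D}=\chi(UG(\linedg{D})).
\]
By \cref{thm:harner}, $\chi(UG(\linedg{D}))\geq \log_2(\chi(UG(D)))$. It therefore suffices to find a triangle-free DAG $D$ with $\chi(UG(D))\geq 2^k$.

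The construction has three steps.

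First, take a triangle-free undirected graph $G$ with $\chi(G)\geq 2^k$. Such a graph exists by classical results, for example the Mycielski construction iterated, or Erd\H{o}s' probabilistic construction of graphs with large girth and large chromatic number.

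Second, fix an arbitrary linear ordering $v_1,\ldots,v_n$ of $V(G)$. Orient every edge $v_iv_j$ with $i<j$ as $v_i\to v_j$, and call the result $D$. Every arc goes forward in the ordering, so $D$ is acyclic. By construction $UG(D)=G$, which is triangle-free, so $D$ is a triangle-free DAG.

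Third, apply \cref{thm:harner} to get
\[
\chi(UG(\linedg{D}))\geq \log_2(2^k)=k.
\]
Then \cref{cor:chromaticpoppod} gives $\podn{D}=\chi(UG(\linedg{D}))\geq k$. This holds for every integer $k$; for $k\leq 1$ the statement is trivial, and a single arc already works.

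There is no real obstacle here. The only external input is the existence of triangle-free graphs of arbitrarily large chromatic number, which is classical (Mycielski). Everything else is a direct application of \cref{thm:harner} and \cref{cor:chromaticpoppod}, which the paper has already stated.
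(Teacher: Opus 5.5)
Your proposal is correct and follows essentially the same route as the paper: take a triangle-free graph with $\chi\geq 2^k$ (e.g.\ via Mycielski), orient it acyclically, and chain \cref{thm:harner} with \cref{cor:chromaticpoppod}. Your explicit orientation along a linear ordering and your remark on the trivial case $k\leq 1$ only spell out details the paper leaves implicit.
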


\begin{proof}
    Let $G$ be a triangle-free undirected graph with $\chi(G) \geq 2^k$, which can be obtained using a known construction from the literature (e.g., \cite{Mycielski1955}).
    We set $D$ to be an acyclic orientation of $G$. By \cref{thm:harner} and \cref{cor:chromaticpoppod}, it follows that
    \[k \leq \log_2(\chi(UG(D))) \leq \chi\left(UG(\vec{L}(D))\right) = \popn{D} = \podn{D},\]
    concluding the proof.
\end{proof}

We proceed by defining \podcrit digraphs and 
\rbarc{}s. We then show that there exist triangle-free DAGs that are \podcrit and how to use this property to create a digraph that contains a \rbarc.

\begin{definition}[\podcrit digraph]
    \label{def:crit}
    A digraph $D$ is \podcrit if and only if:
    \begin{enumerate}
        \item $\popn{D} = k$ and
        \item for every arc $e \in E(D)$: $\popn{D \setminus e} = k-1$.
    \end{enumerate}
\end{definition}

\begin{lemma}
    \label{lem:podcrit-triangle-free}
    For every integer $k\geq 1$ there exists a triangle-free DAG that is \podcrit.
\end{lemma}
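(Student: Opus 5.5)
Start from \cref{lem:large-pod-triangle-free}, which gives a triangle-free DAG $D_0$ with $\popn{D_0} \geq k$, and repeatedly delete arcs while the partition number stays at least $k$. Concretely: as long as there is an arc $e$ with $\popn{D \setminus e} \geq k$, replace $D$ by $D \setminus e$. Each step removes an arc, so the process ends after at most $|E(D_0)|$ steps with a digraph $D$. By \cref{lem:decreasepop}, $D$ is still a triangle-free DAG, and by construction $\popn{D}\geq k$ while $\popn{D\setminus e} \leq k-1$ for every arc $e$.

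It remains to check both conditions of \cref{def:crit}. Applying \cref{lem:decreasepop} to any arc $e$ gives $\popn{D} \leq \popn{D\setminus e}+1 \leq k$, so $\popn{D}=k$. Then the same lemma gives $\popn{D\setminus e}\geq k-1$, which together with the stopping condition yields $\popn{D\setminus e}=k-1$.

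The one delicate point is that this argument needs $D$ to have at least one arc. For $k\geq 1$ this holds: $\popn{D}\geq k\geq 1$ forces $E(D)\neq\emptyset$, since the arcless digraph has partition number $0$. For $k=1$ one can also simply take a single arc, which is critical because deleting it leaves partition number $0$. Note that the process does not have to be effective, since only existence is claimed.

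An alternative view uses \cref{cor:chromaticpoppod}: $D$ is an arc-minimal subdigraph of $D_0$ whose line digraph's underlying graph has chromatic number $k$, analogous to a $k$-critical subgraph. The direct deletion argument above is simpler, though. I expect no real obstacle beyond stating carefully that \cref{lem:decreasepop} preserves the triangle-free DAG property at every step.
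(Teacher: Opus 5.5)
Your proof is correct and follows the same route as the paper. Both start from the triangle-free DAG given by \cref{lem:large-pod-triangle-free}, greedily delete arcs while the partition number stays at least $k$, and use \cref{lem:decreasepop} both to keep the triangle-free DAG property and to pin down the final values. Your derivation of $\popn{D}=k$ is more explicit than the paper's one-line justification: you first check that the final digraph still has an arc, then apply \cref{lem:decreasepop} to that arc together with the stopping condition.
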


\begin{proof}
    % We start with an arbitrary triangle-free DAG $D$ with $\popn{D} \geq k$. 
    % Since a digraph without edges has a partial order partition number of $1$, and removing an edge can decrease the partial order partition number by at most $1$, see \cref{lem:decreasepop}, we can iteratively remove edges from $D$ until we reach a $k$-pop-critical digraph.
    We start with an arbitrary triangle-free DAG $D$ with $\popn{D}\geq k$, guaranteed to exist by \cref{lem:large-pod-triangle-free}, and iteratively apply the following procedure:
    \begin{enumerate}
        \item If there exists an arc $e \in E(D)$ such that $\popn{D \setminus e} \geq k$, we remove $e$.
        \item Otherwise, return $D$.
    \end{enumerate}
    Let $D'$ be a digraph returned by this procedure. 
    %A digraph with no edges has a partial order partition number of $0$.
    As shown in \cref{lem:decreasepop}, at each step of the procedure, the partial order partition number decreases by at most $1$, so $\popn{D'} = k$. Thus,~$D'$ is a \podcrit triangle-free digraph.
\end{proof}

We now define $r$-rainbow arc sets, which will be crucial for our reduction. We also show how to construct them using pop-critical digraphs.

\begin{definition}[\rbarc]
    Let $D$ be a digraph and $S\subseteq E(D)$. The subset~$S$ is a \rbarc if and only if:
    \begin{enumerate}
        \item for any $(r+1)$-coloring $c$ corresponding to a partial order partition of $D$, we have $\left| c(S) \right| \geq r$ and
        \item there exists such an $(r+1)$-coloring $c'$ such that $\left| c'(S) \right| = r$.
    \end{enumerate}
\end{definition}

% As the following lemma shows, we can construct a \rbarc by removing a single arc from a \podcrit[(k+1)] digraph.
% To construct a digraph containing a \rbarc, we take a \podcrit[(k+1)] digraph and remove one arc that is incoming to a sink.
% We construct a digraph containing a \rbarc by taking a \podcrit[(k+1)] digraph and removing any arc that is incoming to a sink.

In the following, we write $\fin(v)$ and $\fout(v)$ for the sets of arcs entering and leaving a vertex $v$, respectively.
In order to construct a \rbarc[(k-1)], we make use of the following lemma.

\begin{lemma}
    \label{lem:u-sink}
    Let $D$ be a \podcrit[(k+1)] triangle-free DAG, let $s \in V(D)$ be a sink, let $(u \rightarrow  s) \in \fin(s)$ and let $D' = D \setminus (u \rightarrow s)$.
    Let $c$ be a $k$-coloring of the arcs of $D'$ corresponding to a partial order partition of $D'$. Then
    \begin{enumerate}
        \item $|c(\fin(u))| = k$.
        \item $u$ is a sink in $D'$.
    \end{enumerate}
\end{lemma}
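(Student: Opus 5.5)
The plan is to argue by contradiction in both parts, extending $c$ to a $k$-coloring of $D$ and contradicting $\popn{D}=k+1$. First, since $D$ is \podcrit[(k+1)], the definition gives $\popn{D'} = k$, so $k$-colorings $c$ as in the statement exist. By \cref{cor:chromaticpoppod}, $c$ is the same as a proper $k$-coloring of $UG(\linedg{D'})$. In the triangle-free DAG setting, this means that no two arcs meeting head-to-tail receive the same color. I will use this reformulation throughout.

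\textbf{Part 1.} In $UG(\linedg{D})$, the neighbours of the vertex $u\rightarrow s$ are exactly the arcs in $\fin(u)$ and the arcs in $\fout(s)$. Since $s$ is a sink, $\fout(s)=\emptyset$, so the neighbourhood is just $\fin(u)$. Suppose $|c(\fin(u))|<k$. Then some color $\alpha\in\{1,\dots,k\}$ does not appear on $\fin(u)$. Assign $\alpha$ to $u\rightarrow s$ and keep $c$ on all other arcs. The only new adjacencies in $UG(\linedg{D})$ compared with $UG(\linedg{D'})$ are those incident to $u\rightarrow s$, and none of them has color $\alpha$. So this is a proper $k$-coloring of $UG(\linedg{D})$, and \cref{cor:chromaticpoppod} applied to $D$ (still a triangle-free DAG) gives $\popn{D}\le k$, a contradiction. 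The reverse inequality $|c(\fin(u))|\le k$ is trivial, so $|c(\fin(u))|=k$.

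\textbf{Part 2.} Suppose $u$ is not a sink in $D'$, so there is an arc $u\rightarrow w$ in $D'$ with $w\neq s$. Every arc of $\fin(u)$ meets $u\rightarrow w$ head-to-tail, so in $c$ the color of $u\rightarrow w$ must differ from every color in $c(\fin(u))$. By Part 1 all $k$ colors appear on $\fin(u)$, so no color remains for $u\rightarrow w$, a contradiction. Hence $u$ is a sink in $D'$.

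\textbf{Main obstacle.} The only delicate point is to check that the neighbourhood of $u\rightarrow s$ in $UG(\linedg{D})$ is exactly $\fin(u)$. This uses two facts: $s$ is a sink, so $\fout(s)$ is empty; and, by triangle-freeness, the line-digraph coloring characterization applies to $D$ itself and not only to $D'$.
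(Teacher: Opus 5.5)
Your proof is correct and follows the paper's argument. For the first part, you color $u\rightarrow s$ with a color missing from $\fin(u)$, which is its entire neighbourhood in $UG(\linedg{D})$ since $s$ is a sink, and this contradicts $\popn{D}=k+1$. For the second part, you note that any out-arc of $u$ in $D'$ is adjacent to arcs of all $k$ colors, so no color is left for it.
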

\begin{proof}
    By \cref{cor:chromaticpoppod}, we have that~$c$ is also a proper coloring of $UG(\vec{L}(D'))$. 
    In~$UG(\vec{L}(D))$, the arc~$u \rightarrow  s$ is only connected to the arcs in~$\fin(u)$, since~$\fout(s) = \emptyset$. Thus, if~$|c(\fin(u))| < k$, we could color~$u \rightarrow  s$ with one of the $k$ colors not used in~$c(\fin(u))$, which would yield a proper coloring of~$UG(\vec{L}(D))$ with $k$ colors. This contradicts the assumption that $D$ is \podcrit[(k+1)].

    For the second statement, assume there exists an arc $u \rightarrow  v$ in $D'$. Since $|c(\fin(u))| = k$ and $u \rightarrow  v$ is connected to all arcs from $\fin(u)$ in $UG(\vec{L}(D'))$, we cannot color it with any of these $k$ colors. This contradicts $\popn{D'}=k$, which is implied by $D$ being \podcrit[(k+1)] and \cref{lem:decreasepop}.
    % By \cref{lem:decreasepop} we have that $c$ is a proper coloring in $UG(\vec{L}(D'))$.
    % If we reinsert $u \rightarrow  s$ to $UG(\vec{L}(D'))$ it would be connected only to $\fin(u)$, since $\fout(s) = \emptyset$.
    % If $|c(\fin(u))| < k$ we could color $UG(\vec{L}(D'))$ with $c$ and color $u \rightarrow  s$ with a color missing in $c(\fin(u))$. This would yield a coloring of $UG(\vec{L}D)$ with $k$ colors, contradicting the assumption that $D$ is \podcrit[k+1]. This proves the first statement.
    % Assume there exists an arc $u \rightarrow  v$ in $D'$. Since $|c(\fin(u))| = k$ and $u \rightarrow  v$ is connected to all arcs from $\fin(u)$ in $UG(\vec{L}(D'))$, we cannot color it with any of these $k$ colors. This contradicts $D$ being \podcrit[k+1], which proves the second statement.
\end{proof}

% We use the previous cosiderations to find a $k$-rainbow arc sets.
The following lemma shows that we can construct a \rbarc[(k-1)] on triangle-free DAGs by removing a single arc from a \podcrit[(k+1)] triangle-free DAG.

\begin{lemma}
    \label{lem:rainbowarc-set-exists}
    Let $D$ be a \podcrit[(k+1)] triangle-free DAG, let $s \in V(D)$ be a sink, let $u \rightarrow  s$ be an arc to the sink, and let~$D' = D \setminus (u \rightarrow s)$. There exists a \rbarc[(k-1)] $S \subseteq \fin(u)$ in~$D'$.
\end{lemma}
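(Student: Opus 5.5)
We will obtain $S$ as an inclusion-minimal subset of $\fin(u)$ with the lower-bound property. The key input is \cref{lem:u-sink}: every $k$-coloring of $D'$ corresponding to a partial order partition uses all $k$ colors on $\fin(u)$.

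Call a set $T \subseteq \fin(u)$ \emph{good} if $|c(T)| \geq k-1$ for every $k$-coloring $c$ of the arcs of $D'$ corresponding to a partial order partition of $D'$.

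First we check that such colorings exist at all. Since $D$ is \podcrit[(k+1)], \cref{lem:decreasepop} gives $\popn{D'} = k$, so the quantifier in the definition of good is not vacuous. By \cref{lem:u-sink}, the whole set $\fin(u)$ satisfies $|c(\fin(u))| = k \geq k-1$ for every such $c$, so $\fin(u)$ is good. The family of good subsets is therefore nonempty and finite, and we let $S$ be an inclusion-minimal good subset of $\fin(u)$. Condition~1 of the definition of a \rbarc[(k-1)] then holds for $S$ by construction.

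For condition~2 we need one coloring $c'$ with $|c'(S)| = k-1$, and we get it from minimality, splitting into two cases.

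\textbf{Case $S = \emptyset$.} Goodness of $\emptyset$ forces $k-1 \leq 0$. Any $k$-coloring of $D'$, which exists since $\popn{D'} = k$, then gives $|c'(S)| = 0 = k-1$.

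\textbf{Case $S \neq \emptyset$.} Pick any $f \in S$. By minimality, $S \setminus \{f\}$ is not good, so there is a valid $k$-coloring $c'$ of $D'$ with $|c'(S \setminus \{f\})| \leq k-2$. Adding the single arc $f$ adds at most one color, so $|c'(S)| \leq k-1$. Goodness of $S$ gives $|c'(S)| \geq k-1$, hence $|c'(S)| = k-1$.

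This shows $S \subseteq \fin(u)$ is a \rbarc[(k-1)] in $D'$. The only substantive step is the starting point that $\fin(u)$ is good, which is supplied by \cref{lem:u-sink}. Everything after that is the minimality argument, which works because removing one arc changes the number of colors by at most one.
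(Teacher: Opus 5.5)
Your proof is correct and is essentially the paper's argument. The paper takes an inclusion-minimal $S' \subseteq \fin(u)$ on which every valid $k$-coloring of $D'$ uses all $k$ colors (seeded by \cref{lem:u-sink}) and sets $S = S' \setminus e$, whereas you minimize the at-least-$(k-1)$-colors property directly; this adds only the trivial case $S=\emptyset$ and otherwise rests on the same fact that a single arc contributes at most one color.
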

    %  and let $D' = D \setminus u \rightarrow  s$. Then there 
\begin{proof}
    We start by choosing $S' \subseteq \fin(u)$ with the following properties:
    \begin{enumerate}
        \item In every $k$-coloring $c$ corresponding to a partial order partition of $D'$, we have $\left| c(S') \right| = k$ and
        \item $S'$ is a minimal set with that property.
    \end{enumerate}
    % In order to show that such a set~$S'$ exists, we first show that in every $k$-coloring~$c$ of~$D'$, it holds that~$|c(\fin(u))| = k$. 
    % Let~$c$ be a $k$-coloring of $D'$. By \cref{cor:chromaticpoppod}, it follows that~$c$ is a proper coloring of~$UG(\linedg{D'})$. 

    Such a set $S'$ exists, since $\fin(u)$ satisfies the first property by \cref{lem:u-sink}, and we can therefore choose a subset of $\fin(u)$ that is minimal under inclusion and has the desired property.
    Let $e \in S'$, and let $S = S' \setminus e$. For any $k$-coloring $c$, we have $|c(S)|\geq k-1$. Since $S'$ was chosen to be minimal, there exists a $k$-coloring $c'$ of $D'$ with $|c'(S)| < k$.
\end{proof}

\begin{figure}[ht]
    \centering
    \includegraphics[page=4]{figures/gadgetes.pdf}
    \caption{A part of $\mathcal V_v$ for $k=4$. (a) Deletion of $(u \rightarrow s)$ from a $5$-pop-critical DAG. (b) Selection of the $3$-rainbow arc set $S$ (shaded). (c) Copy of $S$ to $v'$, leaving one color for $(v' \rightarrow v'')$, which is the same color as of edge $e$ excluded from~$S$ in the construction of the $3$-rainbow arc set.}
    \label{fig:rainbow}
\end{figure}

\subsection{Reduction}

We are now ready to prove the main theorem of the section.
%\todo{al: Remove proof environment, write the reduction in plain text and make the claim a lemma with its own proof environment}
\begin{proof}[Proof of \Cref{thm:k3}]
    Both problems are clearly in \NP\ since we can verify a solution in polynomial time. To prove \NP-hardness of both variants, by \cref{lem:podpopequal}, it is sufficient to show hardness for partial order partitions, as we restrict our problem to triangle-free DAGs. We reduce from $k$-\textsc{Coloring}, which is \NP-hard for $k \geq 3$ even when restricted to triangle-free graphs \cite{maffray1996}.

    Let $G=(V,E)$ be a triangle-free instance of $k$-\textsc{Coloring}. For each $v\in V$, we create a vertex gadget $\mathcal V_v$ as follows.

    %\todo[inline]{gg: We reduce from $k$-coloring, so we should use gadgets that steal $k-1$ colors. The current construction steals $k$ colors, which is not correct. dj: for camera-ready}

    \begin{enumerate}
        \item Create a \podcrit[(k+1)] triangle-free DAG $D$, which is guaranteed to exist by \cref{lem:podcrit-triangle-free}.
        \item Choose a sink $s$ and an arc $u \rightarrow s$ and delete the arc from $D$.
        \item Add an isolated vertex $v'$ to $D$.
        % \todo{gg: Add a disconnected arc $v' \rightarrow v''$ to $D$. We will use color of this arc as an indicator of the color of $v$ in the coloring of $G$.}
        \item Choose a \rbarc[(k-1)] $S \subseteq \fin(u)$, which exists by \cref{lem:rainbowarc-set-exists}.
        \item For each arc $(w \rightarrow  u)\in S$, add an arc $w \rightarrow  v'$ to $D$.
    \end{enumerate}
    See \cref{fig:rainbow} for an example. 
    By construction, the gadget is also a triangle-free DAG.

    \begin{claim}
        For any vertex gadget constructed as above, $\fin(v')$ is a \rbarc[(k-1)].
    \end{claim}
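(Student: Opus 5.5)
The plan is to work in the gadget $\mathcal V_v$ itself, which is $D'$ together with the new vertex $v'$ and the copied arcs. Write $\mathcal V_v = D' \cup \{w \rightarrow v' : (w\rightarrow u)\in S\}$. By \cref{cor:chromaticpoppod}, $k$-colorings corresponding to partial order partitions of $\mathcal V_v$ are exactly proper $k$-colorings of $UG(\linedg{\mathcal V_v})$, and the same holds for $D'$. The key observation is that $v'$ is a sink, so each new arc $w \rightarrow v'$ is adjacent in $UG(\linedg{\mathcal V_v})$ exactly to the arcs of $\fin(w)$. This is the same neighbourhood that $w \rightarrow u$ has inside $D'$: since $u$ is a sink in $D'$ by \cref{lem:u-sink}, the arc $w\rightarrow u$ has no out-neighbours there.

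From this, I will set up a bijection between colorings. Any proper $k$-coloring $c$ of $UG(\linedg{D'})$ extends to $\mathcal V_v$ by giving $w\rightarrow v'$ the color $c(w\rightarrow u)$. This is proper, because the neighbours of $w \rightarrow v'$ are $\fin(w)$, which avoid the color of $w\rightarrow u$. Conversely, restricting a proper coloring of $\mathcal V_v$ to $E(D')$ gives a proper coloring of $D'$. The new arcs are pairwise non-adjacent and introduce no adjacencies among the old arcs, since they end in the sink $v'$.

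Both rainbow conditions then transfer directly. For condition 1, take any $k$-coloring $c$ of $\mathcal V_v$. Its restriction to $D'$ is valid, so $|c(S)|\ge k-1$ because $S$ is a \rbarc[(k-1)] in $D'$. I still need to relate $c(\fin(v'))$ to $c(S)$, and here the two colorings need not agree arc by arc. Instead, define a modified coloring $\tilde c$ that agrees with $c$ on $D'$ and recolors each $w\rightarrow u$ with $c(w\rightarrow v')$. This is proper: $w\rightarrow u$ is adjacent only to $\fin(w)$, which avoids $c(w\rightarrow v')$ since $c$ is proper on $\mathcal V_v$. So $\tilde c$ is a valid $k$-coloring of $D'$ with $\tilde c(S)=c(\fin(v'))$, and the rainbow property of $S$ gives $|c(\fin(v'))|\ge k-1$. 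For condition 2, take $c'$ on $D'$ with $|c'(S)|=k-1$ and extend it by copying colors; then $|c'(\fin(v'))|=k-1$.

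The main obstacle is making sure the recoloring step stays valid, that is, that the arcs $w\rightarrow u$ have no neighbours other than $\fin(w)$. This is exactly where \cref{lem:u-sink}, giving that $u$ is a sink in $D'$, is essential. I would also check that $\popn{\mathcal V_v}= k$, so that "$k$-coloring" refers to the intended $(r+1)$-colorings. This follows from the extension argument together with $\popn{D'}=k$.
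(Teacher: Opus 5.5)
Your proof is correct and follows essentially the same route as the paper. Since $u$ is a sink in $D'$ (by the sink lemma) and $v'$ is a sink by construction, the arcs $w\rightarrow u$ and $w\rightarrow v'$ are each adjacent in the line digraph exactly to $\fin(w)$; recolouring $S$ with the colours of $\fin(v')$ then gives the lower bound, and copying the colours of a colouring with $|c'(S)|=k-1$ onto $\fin(v')$ gives the attaining colouring. The one slip is in wording: the copying map is an injection, not a bijection, of colourings, but your argument never uses surjectivity.
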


    \begin{claimproof}
To prove the claim, let $c$ be a proper $k$-coloring of
$UG(\linedg{D})$ corresponding to a partial order partition of $D$, and let $S \subseteq \fin(u)$ be the \rbarc[(k-1)] chosen in the
construction of the vertex gadget. No arc in $\fin(v')$ is adjacent to an arc
of $S$ in~$UG(\linedg{D})$. Moreover, for every $w$, the two arcs $(w \rightarrow v') \in \fin(v')$ and
$(w \rightarrow u) \in S$ have the same neighborhood outside $\fin(v') \cup S$, as~$u$ and~$v'$ are sinks. For~$u$, this is due to \cref{lem:u-sink}, and for~$v'$ it is by construction.
Consequently, we may recolor every arc $(w \rightarrow u) \in S$ with the color $c(w \rightarrow v')$,
and the resulting coloring is still proper. Since $S$ is a \rbarc[(k-1)], it
receives at least $k-1$ colors in the resulting coloring. Therefore, we have $|c(\fin(v'))| \geq k-1$.

Conversely, consider a proper $k$-coloring $c$ of $UG(\linedg{D}) \setminus \fin(v')$ in which $S$ receives exactly $k-1$ colors. Such a
coloring exists because $S$ is a \rbarc[(k-1)]. We can extend $c$ to $\fin(v')$ by
assigning $c(w \rightarrow u)$ to $w \rightarrow v'$ for every $(w \rightarrow v') \in \fin(v')$. The neighborhood property above, together
with the fact that there are no adjacencies between $\fin(v')$ and $S$,
ensures that this extension is valid. Thus, we have $|c(\fin(v'))|=|c(S)|=k-1$. Together with the first part of the proof, this shows that every proper $k$-coloring assigns at least $k-1$ colors to $\fin(v')$, and there exists a proper $k$-coloring assigning exactly $k-1$ colors to it. Hence $\fin(v')$ is a \rbarc[(k-1)], which proves the claim.
% \todo{gg: Use missing color to color arc $v' \rightarrow v''$.}
    \end{claimproof}

% We now construct our reduction as follows.
We now construct a triangle-free DAG $D$ from $G$ as follows; see \cref{fig:reduction} for an example.
\begin{enumerate}
    \item Start with an empty digraph $D$.
    \item For each vertex $v \in V(G)$, create a vertex gadget and add it to $D$. Furthermore, for each vertex $v \in V(G)$, add a vertex $v''$ and the arc $v' \rightarrow v''$ to $D$.
    \item Choose any order $\pi$ of the vertices of $G$. 
    \item Create an arc $v' \rightarrow w'$ between the two vertices $v'$ of $\mathcal V_v$ and $w'$ of $\mathcal V_w$ in $D$ if the corresponding vertices $v,w$ are adjacent in $G$ and $\pi(v) < \pi(w)$.
    % \item Add a vertex, $v_0$ to $D$.
    % \item For every vertex gadget, add the arc $v' \rightarrow v_0$ to $D$.
\end{enumerate}
% \todo{gg: Do not add $v_0$.}

Suppose first that $G$ admits a proper $k$-coloring $c$. We construct a $k$-arc-coloring $c'$ of $D$ corresponding to a partial order partition. For each vertex $v$ in $G$, we color each arc leaving the corresponding vertex $v'$ in $D$ with $c(v)$.
By the claim above, we may color each vertex gadget so that $\fin(v')$ receives exactly the $k-1$ colors different from $c(v)$. Suppose that an arc $w' \rightarrow v'$ receives the same color as the outgoing arcs of $v'$. By construction,~$c(v)=c(w)$, which contradicts the assumption that $c$ is a proper coloring of~$G$, as~$v$ and $w$ are adjacent in $G$. Hence, no $2$-path through $v'$ is monochromatic, and since every other $2$-path lies entirely within one vertex gadget, whose coloring is proper on $UG(\linedg{\mathcal V_v})$ by construction, $c'$ is a proper coloring of $UG(\linedg{D})$. By \cref{cor:chromaticpoppod}, this corresponds to a partial order partition of $D$ into at most $k$ parts.

Conversely, suppose that $\popn{D} \leq k$ with a corresponding $k$-arc-coloring $c'$. Each $v'$ has at least one outgoing arc, $v' \rightarrow v''$, so there is at least one color in $c'(\fout(v'))$. By the claim above, $\left| c'(\fin(v')) \right| \geq k-1$, so there is at most one color in $c'(\fout(v'))$. Thus, we can construct a coloring of the vertices of $G$ by letting $c(v)$ denote the unique color in $c'(\fout(v'))$. Assume that $c$ is not a proper coloring, meaning that there exists $\{v_1, v_2\} \in E(G)$ such that $c(v_1) = c(v_2)$.
Without loss of generality, we may assume that the arc $v_1' \rightarrow  v_2'$ is in $D$. We consider the directed path formed by the vertices $v_1', v_2', v_2''$ and observe that $c'(v_1' \rightarrow  v_2') = c(v_1) = c(v_2) = c'(v_2' \rightarrow  v_2'')$, a contradiction. This concludes the reduction.
\end{proof}

\begin{figure}[ht]
    \centering
    \includegraphics[page=5]{figures/gadgetes.pdf}
    \caption{The reduction for $k=3$. (a) A $3$-coloring of $G=C_5$. (b) The corresponding DAG $D$, with shaded vertex gadgets and a sink $v''_i$ for each $v'_i$. For $i<j$ we orient the arc from $v'_i$ to $v'_j$.}
    \label{fig:reduction}
\end{figure}

Notice that our construction works for every fixed $k$, but we do not construct the vertex gadget explicitly and we do not have a polynomial bound in terms of $k$ on the size of a vertex gadget.
It would be interesting to have an explicit, polynomial-time construction of $k$-pop-critical digraphs.
\todo[disable]{gg: this is perhaps noy possible for DAGs, but it is possible for undirected graphs.}

    %\begin{figure}[ht]
    %    \centering
    %    \includegraphics[page=3]{figures/gadgetes.pdf}
    %    \caption{Coloring and Covering are different.}
    %    \label{fig:different}
    %\end{figure}
    
\section{Open Problems}
\label{sec:open}
While we were able to characterize the complexity of partial order decomposition on triangle-free DAGs, the complexity of this problem on tournaments remains open for $k \geq 3$.
Furthermore, we introduced the partial order partition number and showed that it can differ from the partial order decomposition number.
It is an intriguing question whether the difference between the two parameters can be arbitrarily large.

\bibliographystyle{plainurl}
\bibliography{abbrv,literature}

@inproceedings{schaefer_complexity_1978,
author = {Schaefer, Thomas J.},
title = {The complexity of satisfiability problems},
year = {1978},
publisher = {Association for Computing Machinery},
doi = {10.1145/800133.804350},
booktitle = {Proceedings of the Tenth Annual ACM Symposium on Theory of Computing},
pages = {216--226},
}

@misc{aboulker2026,
      title={Decomposing tournaments into comparability graphs}, 
      author={Pierre Aboulker and Logan Crew and Julien Duron and Xinyue Fan and Hugo Jacob and Rémy Kimbrough and Hidde Koerts and Benjamin Moore and Sophie Spirkl and Stéphan Thomassé},
      year={2026},
      url={https://arxiv.org/abs/2606.07748}, 
}

@article{larman1994ramsey,
  author  = {David Larman and Ji{\v r}{\'i} Matou{\v s}ek and J{\'a}nos Pach and Jen{\H o} T{\"o}r{\H o}csik},
  title   = {A {Ramsey}-Type Result for Convex Sets},
  journal = {Bulletin of the London Mathematical Society},
  volume  = {26},
  number  = {2},
  pages   = {132--136},
  year    = {1994},
  doi     = {10.1112/blms/26.2.132},
}

@misc{gyarfas2024partitioning,
  author        = {Andr{\'a}s Gy{\'a}rf{\'a}s and M{\'a}rton Marits and G{\'e}za T{\'o}th},
  title         = {Partitioning perfect graphs into comparability graphs},
  year          = {2024},
  eprint        = {2408.13523},
  archivePrefix = {arXiv},
  primaryClass  = {math.CO},
  doi           = {10.48550/arXiv.2408.13523},
  url           = {https://arxiv.org/abs/2408.13523},
}

@article{harner1972arc,
  author  = {C. C. Harner and R. C. Entringer},
  title   = {Arc Colorings of Digraphs},
  journal = {Journal of Combinatorial Theory, Series B},
  volume  = {13},
  number  = {3},
  pages   = {219--225},
  year    = {1972},
  doi     = {10.1016/0095-8956(72)90057-3},
}

@article{maffray1996,
author = {Frédéric Maffray and Myriam Preissmann},
title = {On the NP-completeness of the k-colorability problem for triangle-free graphs},
year = {1996},
journal = {Discrete Mathematics},
volume = {162},
number = {1},
pages = {313--317},
doi = {10.1016/S0012-365X(97)89267-9},
}

@inproceedings{erdoshajnal,
  author    = {Erd{\H{o}}s, Paul and Hajnal, Andr{\'a}s},
  title     = {On chromatic number of infinite graphs},
  year      = {1968},
  booktitle = {Theory of Graphs (Proceedings of the Colloquium held at Tihany, 1966)},
  pages     = {83--98},
}

@article{Mycielski1955,
  author  = {Mycielski, Jan},
  title   = {Sur le coloriage des graphes},
  journal = {Colloquium Mathematicum},
  volume  = {3},
  number  = {2},
  pages   = {161--162},
  year    = {1955},
  doi     = {10.4064/cm-3-2-161-162}
}

\end{document}